\theoremstyle{definition}\newtheorem{Df}{Definition}
\theoremstyle{plain}\newtheorem{Th}{Theorem}
\theoremstyle{definition}\newtheorem{Rm}{Remark}
\theoremstyle{definition}
\theoremstyle{plain}
\theoremstyle{plain}
\theoremstyle{plain}\newtheorem{Lm}{Lemma}
\theoremstyle{plain}
\begin{document}
\begin{frontmatter}
\title{ \Large{Communication complexity of  promise problems\\ and their applications to finite automata}}

\author{Jozef Gruska$^{1}$}
\author{Daowen Qiu$^{2}$ }
\author{Shenggen Zheng$^{1,}$\corref{one}}

 \cortext[one]{Corresponding author.\\ \indent{\it E-mail addresses:} zhengshenggen@gmail.com (S. Zheng), gruska@fi.muni.cz (J. Gruska),  issqdw@mail.sysu.edu.cn (D. Qiu).}

\address{

  $^{1}$Faculty of Informatics, Masaryk University, Brno 60200, Czech Republic\\

  $^2$Department of Computer Science, Sun Yat-sen University,
Guangzhou 510006,
  China\\
}

\begin{abstract}

{\em Equality} and {\em disjointness}  are two of the most studied problems in {\em communication complexity}. They have been studied for both classical and also quantum communication and for various models and modes of communication.
Buhrman et al. \cite{Buh98} proved that the exact {\em quantum communication complexity} for a promise version of the equality problem is  ${\bf O}(\log {n})$  while the {\em classical deterministic  communication complexity} is $n+1$ for two-way communication, which was the first impressively large (exponential) gap between quantum and classical (deterministic and probabilistic) communication complexity. If an error is tolerated,  both quantum and probabilistic communication complexities for equality are ${\bf O}(\log {n})$. However, even if an  error is tolerated, the gaps between quantum (probabilistic) and deterministic complexity are not larger than quadratic for the disjointness problem \cite{AA03,Bar02,Kla00,Ks92,Raz92,Raz03}.
It is therefore interesting to ask whether there are some promise versions of  the disjointness problem for which bigger gaps can be shown. We give a positive answer to such a question. Namely,
 we  prove that there exists an exponential gap between  quantum (even probabilistic) communication complexity  and classical deterministic communication complexity of some specific versions of the disjointness problem.

Klauck \cite{Kla00} proved, for any language,  that the state complexity  of exact quantum/classical finite automata, which is a general model of one-way  quantum finite automata,  is not less than the state complexity of an equivalent one-way deterministic finite automata (1DFA).
In this paper we show, using a communication complexity result, that situation may be different for some promise problems. Namely, we show for certain promise problem that the gap between the
 state complexity of exact one-way  quantum finite automata and 1DFA can be exponential.

\end{abstract}

\begin{keyword}
Communication complexity\sep Disjointness \sep Quantum finite automata  \sep State complexity

\end{keyword}

\end{frontmatter}

\section{Introduction}

Since the topic of communication complexity was introduced by Yao \cite{Yao79} in 1979, it has been  extensively studied \cite{Bra03,Buh09,Hro97,KusNis97}.
In the setting of two distributed parties, Alice is given as an input $x\in\{0,1\}^n$ and Bob is given as an input $y\in\{0,1\}^n$ and their task is to communicate in order to be able to compute the value of some boolean function $f(x,y)\in\{0,1\}$, while exchanging as small  number of bits between Alice and Bob  as possible. In the general case, local computation
is free, but communication is expensive and has to be minimized.
Alice and Bob each have a Turing machine (probabilistic, quantum) and can use all the computation power of their machines.
There are three kinds of communication complexities according to the models (or protocol) used by Alice and Bob: deterministic, probabilistic or quantum.

The communication between Alice and Bob could be  one-way, two-way or simultaneous.
 In {\em one-way communication},  Alice sends a single message to Bob who then is to determine the output.  In two way communication, Alice and Bob take turns in sending messages to each other. In the simultaneous communication mode, both Alice and Bob  send their messages to a third party (called the referee) who, upon receiving messages from Alice and Bob, determines the output.

Two of the most studied communication problems  are equality and disjointness \cite{KusNis97}, which are defined, given a string $x$ as an Alice's input and a string $y$ as a Bob's input, as follows:
\begin{itemize}
\item {\bf Equality}: $\text{EQ}(x,y)=1$ if $x=y$ and 0 otherwise.
  \item {\bf Disjointness}: $\text{DISJ}(x,y)=1$  if there is no index $i$ such that $x_i=y_i=1$ and $0$ if such an index exists.   Equivalently, we can define also this function as
  $\text{DISJ}(x,y)=1$ if $\sum_{i=1}^n x_i\wedge y_i=0$ and  $0$ if $\sum_{i=1}^n x_i\wedge y_i>0$. (We can think of $x$ and $y$ as being subsets of $\{1,\cdots,n\}$ represented by  characteristic vectors and to have $\text{DISJ}(x,y)=1$ if and only if these two subsets are disjoint, i.e., $x\cap y=\emptyset$.)
\end{itemize}

The deterministic  communication complexities (if we do not point out explicitly in this paper, otherwise communication complexity will refer always to two-way communication complexity) for $\text{EQ}$ and $\text{DISJ}$ problems are both $n+1$ \cite{KusNis97}. Buhrman et al. \cite{Buh98} proved that the exact quantum communication complexity for the following promise version of the equality problem
 \begin{equation}
\text{EQ}'(x,y)=\left\{\begin{array}{ll}
                    1 &\ \text{if}\ H(x,y)=0 \\
                    0 &\ \text{if}\ H(x,y)=\frac{n}{2}
                  \end{array}
 \right.,
\end{equation}
is  ${\bf O}(\log {n})$, where $H(x,y)$ is the Hamming distance between $x$ and $y$, which is the number of bit positions on which they
differ.
This was the first impressively large (exponential) gap between quantum and classical (deterministic and probabilistic) communication complexity. If errors are tolerated,  both quantum and probabilistic communication complexity for equality are ${\bf O}(\log {n})$.

Concerning  communication complexity for the disjointness problem,  even on the case an error is tolerated, the probabilistic communication complexity for disjointness is ${\bf \Omega}(n)$ \cite{Bar02,Ks92,Raz92}.
   In the quantum cases, Buhrman et al.  \cite{Buh98} proved that the quantum communication complexity for $\text{DISJ}$ is ${\bf O}(\sqrt{n}\log n)$. This bound was improved to ${\bf O}(\sqrt{n})$ by Aaronson and Ambainis \cite{AA03}. Finally, Razborov
showed that any bounded-error  quantum  protocol for $\text{DISJ}$  needs to communicate about $\sqrt{n}$ qubits \cite{Raz03}.
  However, Klauck \cite{Kla00} showed that the  one-way quantum communication complexity of $\text{DISJ}$  is ${\bf \Omega}(n)$.
It is unlike the problem $EQ$ for which there is an exponential gap between quantum (probabilistic) communication complexity  and deterministic communication complexity \cite{KusNis97,Buh98,Buh09}.  All the known gaps of $\text{DISJ}$ are not larger than quadratic.
It is therefore interesting to find out whether there are some promise versions of the set disjointness problem such that bigger communication complexity gaps can be achieved. In order to prove that,
 we consider the following promise problems in this paper, for $0<\lambda\leq \frac{1}{4}$,
 \begin{equation}
    \text{DISJ}_{\lambda}'(x,y)=\left\{\begin{array}{ll}
                    1 &\ \text{if}\  \sum_{i=1}^n x_i\wedge y_i=0 \\
                    0 &\ \text{if}\ \lambda n\leq \sum_{i=1}^n x_i\wedge y_i\leq (1-\lambda) n\\
                  \end{array}
 \right..
\end{equation}
 We  prove that  quantum communication complexity for $\text{DISJ}_{\lambda}'$ is not more than $\frac{\log 3}{3\lambda}(3+2\log n)$ while the deterministic communication complexity for $\text{DISJ}_{\lambda}'$ is  ${\bf \Omega}(n)$.
  For example, if $\lambda=\frac{1}{4}$, then the quantum communication complexity for $DISJ_{\frac{1}{4}}'$ is not more than $3+2{\log n}$ while the  deterministic communication complexity for $\text{DISJ}_{\lambda}'$ is  more than 0.007n\footnote{The idea of the proof is inspired by \cite{Buh98,Buh09}.}. We prove also that  one-way probabilistic communication complexity for  $\text{DISJ}_{\lambda}'$ is not more than $\frac{\log{3}}{\lambda}\log{n}$.

Number of states is a natural complexity measure for all models of finite automata and  state complexity of finite automata is one of the important research fields with many applications \cite{Yu05}.
There is a variety of methods how to prove lower bounds on state complexity  and methods as well as the results of communication complexity are among the main ones \cite{Hro01,Kla00,KusNis97}. In this paper we also show how to make use of our new communication complexity results to get  new state complexity outcomes.

Klauck \cite{Kla00} has proved that, for any language,  the state complexity of exact  quantum/classical finite automata, which is a general model of one-way  quantum finite automata,   is not less than the state complexity of one-way deterministic finite automata (1DFA).  Therefore, it is interesting and important to find out whether the result still holds for interesting  cases of promise problems or not\footnote{Ambainis and Yakaryilmaz showed in  \cite{AmYa11} that there is a very special case in which the superiority of quantum
computation to classical one cannot be bounded. }.  Applying the communication complexity result from \cite{Buh01,Buh09} to  finite automata,  for any $n\in {\mathbb{Z}}^+$, we prove  that there exists a promise problem $A_{EQ}(n)$ that can be solved exactly by a  {\em measure-once one-way finite automata with quantum and classical state} (MO-1QCFA) with $n$ quantum basis states and ${\bf O}({n})$ classical states, whereas the sizes of the corresponding 1DFA are $2^{{\bf \Omega}(n)}$.
 We then apply  our communication complexity result  to finite automata and prove that, for any $n\in {\mathbb{Z}}^+$, there exists a  promise problem $A_{D}(n)$  that can be solved with one-sided error  by an  MO-1QCFA  with $2n$ quantum basis states and ${\bf O}({n})$ classical states, whereas the sizes of the corresponding 1DFA are $2^{{\bf \Omega}(n)}$.

 The paper is structured as follows. In Section 2 needed basic concepts and notations are introduced and  models (including communication complexity and finite automata)  involved are
described in some details. Communication complexities for $\text{DISJ}_{\lambda}'$ is discussed in Section 3.  Applications to finite automata are studied in Section 4.

\section{Preliminaries}

In this section, we recall some of the basic definitions about communication complexity and quantum finite automata.

\subsection{Communication complexity}

We recall only some very  basic concepts and notations of {\it communication complexity},
and we refer the reader to \cite{Buh09,Kla00-r,KusNis97,Yao79} for
more details. We will deal with the situation that there are  two
communicating  parties and with very simple tasks of
computing two argument functions for the case  one argument is known to
one party and the other argument is known to the other party.
We will completely ignore computational resources needed by
parties and we focus solely on the amount of communication that is need to be
exchanged between both parties in order to compute the value of a function.

More technically, let $X, Y$ be finite sets $\{0,1\}^n$. We consider a two-argument function $f: X\times Y\rightarrow \{0,1\}$ and two communicating parties.  Alice is given an input $x\in X$ and Bob is given an input $y\in Y$. They wish to compute $f(x,y)$.

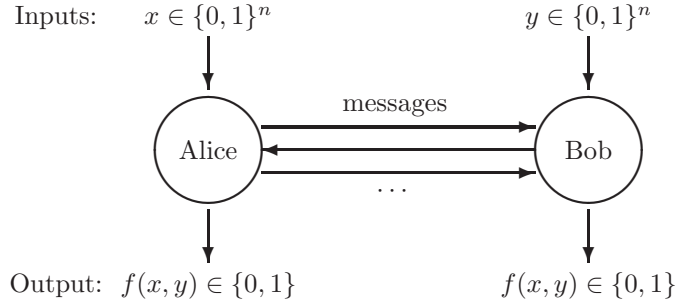
\begin{figure}[htbp]
  \centering

  \setlength{\unitlength}{1cm}
\begin{picture}(30,4)\thicklines

\put(6,3.5){\vector(0,-1){0.7}\makebox(0,0.5){$x\in\{0,1\}^n$}\makebox(-4,0.5){Inputs:}}
\put(11,3.5){\vector(0,-1){0.7}\makebox(0,0.5){$y\in\{0,1\}^n$}}

\put(11,1.2){\vector(0,-1){0.7}\makebox(0,-2){$f(x,y)\in\{0,1\}$}\makebox(-14,-2){Output:}}

\put(6,1.2){\vector(0,-1){0.7}\makebox(0,-2){$f(x,y)\in\{0,1\}$}}

\put(6,2){\circle{2}\makebox(0,0){Alice}}

\put(11,2){\circle{2}\makebox(0,0){Bob}}

\put(6.7,2.3){\vector(1,0){3.6}\makebox(-3.7,0.5){messages}}
\put(10.3,2){\vector(-1,0){3.6}\makebox(-3.7,0.5)}
\put(6.7,1.7){\vector(1,0){3.6}\makebox(-3.7,-0.5){$\cdots$}}

\end{picture}
  \centering\caption{Communication protocol}\label{f2}
\end{figure}

The computation of   $f(x,y)$ will be  done using a communication protocol, see Figure \ref{f2}. During the
execution of the protocol, the two parties alternate roles in sending messages. Each of these
messages is a bit-string. The protocol, whose steps are based on the communication so far, specifies also for each step whether the communication terminates
(in which case it also specifies what is the output). If the communication is
not to terminate, the protocol specifies what kind of  message the sender (Alice or Bob) should send next, as
a function of its input and communication so far.

A deterministic communication protocol ${\cal P}$ computes
the function $f$, if for every input pair $(x,y)\in A\times B$ the protocol terminates with the
value $f(x,y)$ as its output.
In a probabilistic  protocol, Alice and Bob may flip coins and the protocol can have an erroneous output with a small probability.
In a  quantum protocol, Alice and Bob may   use quantum resources.
 Let ${\cal P}(x,y)$ denote the  output of the protocol ${\cal P}$, we will consider 3 different kinds of protocols for computing $f$:
\begin{itemize}
  \item An exact protocol always outputs the right answer, $Pr({\cal P}(x,y)=f(x,y))=1$.
  \item A one-sided error\footnote{If the error is defined in another side, the communication complexity can be very different in some cases.} protocol that works in such a way that  $Pr({\cal P}(x,y)=f(x,y))=1$ if $f(x,y)=0$ and $Pr({\cal P}(x,y)=f(x,y))\geq \frac{1}{2}$ if $f(x,y)=1$.
  \item A two-sided error (a bounded error) protocol ${\cal P}$ is such that  $Pr({\cal P}(x,y)=f(x,y))\geq \frac{2}{3}$.
\end{itemize}

The communication complexity of a protocol ${\cal P}$  is the
worst case number of (qu)bits exchanged.  The communication complexity of $f$ is, with  which respect to the communication mode used  the complexity of an
optimal protocol for $f$.

In a one-way protocol, shown in Figure \ref{f1}, Alice sends a single message to Bob who then determines the output.

\begin{figure}[htbp]
  \centering

  \setlength{\unitlength}{1cm}
\begin{picture}(30,4)\thicklines

\put(6,3.5){\vector(0,-1){0.7}\makebox(0,0.5){$x\in\{0,1\}^n$}\makebox(-4,0.5){Inputs:}}
\put(11,3.5){\vector(0,-1){0.7}\makebox(0,0.5){$y\in\{0,1\}^n$}}

\put(11,1.2){\vector(0,-1){0.7}\makebox(0,-2){$f(x,y)\in\{0,1\}$}\makebox(-14,-2){Output:}}

\put(6,2){\circle{2}\makebox(0,0){Alice}}

\put(11,2){\circle{2}\makebox(0,0){Bob}}

\put(6.7,2){\vector(1,0){3.6}\makebox(-3.7,0.5){one message}}

\end{picture}
  \centering\caption{One-way communication protocol}\label{f1}
\end{figure}

We use $D(f)$ to denote  the deterministic complexity of $f$.
We will use $R^1_E(f), R^1_1(f),R^1_2(f)$, $R_E(f), \linebreak[0]R_1(f),R_2(f)$ to denote the one-way and two-way probabilistic communication complexity for $f$ in the exact, one-sided error and two-sided error settings, respectively. Similarly we define $Q^1_E(f), Q^1_1(f),Q^1_2(f)$, $Q_E(f), Q_1(f),Q_2(f)$ for the quantum versions of these communication complexities.

We can summarize some of the previous results about $\text{EQ}$,  $\text{DISJ}$ and promise problem $\text{EQ}'$   as follows:
\begin{enumerate}
  \item $D(\text{EQ})=n+1$, $D(\text{DISJ})=n+1$,  \cite{KusNis97}.
  \item $D(\text{EQ}')\in {\bf \Omega}(n)$, $Q(\text{EQ}')\in {\bf O}(n)$ \cite{Buh98,Buh09}.

  \item $R_2(\text{EQ})\in {\bf O}(\log n)$ \cite{KusNis97}, $R_2(\text{DISJ})\in {\bf \Omega}(n)$ \cite{Bar02,Ks92,Raz92}.
  \item $Q_E(\text{EQ})=Q_1(\text{EQ}) =Q_1(\text{DISJ})= n+1$ \cite{Buh01,HPD02}.
  \item $Q_2(\text{DISJ})\in {\bf \Theta(\sqrt{n})}$ \cite{AA03,Raz03}.
  \item $Q_2^1(\text{DISJ})\in {\bf \Omega}(n)$ \cite{Kla00}.
\end{enumerate}

\subsection{Lower bound methods}

There are quite a lot of lower bound methods  known for classical communication complexity. We just recall the ``rectangles" method  in this paper. Concerning more on lower bound methods, we refer the reader to \cite{Buh09,Hro97,KusNis97}.

A {\em rectangle} of a product of sets $X\times Y$ is a set $R=A\times B$ with $A\subseteq X$ and $ B\subseteq Y$. A rectangle $R=A\times B$ is called
$1(0)$-rectangle of a function $f:X\times Y\to \{0,1\}$
if for every $x\in A$ and $y\in B$ the value of $f(x,y)$ is 1 (0). Moreover $C^i(f)$ is defined  as the minimum number of $i$-rectangles that partition the space of $i$-inputs (such inputs $x$ and $y$ that $f(x,y)=i$).

\begin{Lm}\label{lm-d-lowbound}\cite{KusNis97}
For every function $f:X\times Y\rightarrow\{0,1\}$, $D(f)\geq \log\max\{\log{ C^1(f)},\log{ C^0(f)}\}$.
\end{Lm}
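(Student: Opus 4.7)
The plan is to establish the classical monochromatic-rectangle partition bound: any deterministic protocol of cost $c$ induces a partition of $X\times Y$ into at most $2^c$ $f$-monochromatic combinatorial rectangles. Since $C^1(f)$ and $C^0(f)$ are by definition the minimum numbers of $1$- and $0$-rectangles needed to partition $f^{-1}(1)$ and $f^{-1}(0)$, such a partition immediately forces $C^1(f)\le 2^{D(f)}$ and $C^0(f)\le 2^{D(f)}$, yielding in fact the stronger bound $D(f)\ge \max\{\log C^1(f),\log C^0(f)\}$, from which the stated inequality follows at once.

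I would fix an optimal deterministic protocol $\mathcal{P}$ and view it as a binary tree: each internal node is owned by Alice or Bob, the two outgoing edges correspond to the bit sent, and each leaf carries an output label in $\{0,1\}$. On input $(x,y)$ the players trace a unique root-to-leaf path, producing a binary transcript $t$ of length at most $D(f)$; hence at most $2^{D(f)}$ transcripts are reachable. The core step is the \emph{rectangle property}: for every reachable transcript $t$, the set $R_t=\{(x,y)\in X\times Y : \mathcal{P}\text{ on }(x,y)\text{ produces transcript }t\}$ is a combinatorial rectangle $A_t\times B_t$. I would prove this by induction on $|t|$. The empty transcript gives $R_\varepsilon=X\times Y$. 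For the inductive step, suppose $R_{t'}=A_{t'}\times B_{t'}$ and Alice speaks at the node reached by $t'$; then the bit she sends depends only on $x$ and $t'$, so the set of $x\in A_{t'}$ sending bit $b$ is some $A_{t'b}\subseteq A_{t'}$ and $R_{t'b}=A_{t'b}\times B_{t'}$; the Bob case is symmetric. Because the leaf label is a single value and $\mathcal{P}$ computes $f$ correctly on every input pair reaching that leaf, each $R_t$ is $f$-monochromatic.

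Grouping the reachable rectangles by their leaf label then produces a partition of $f^{-1}(1)$ into at most $2^{D(f)}$ $1$-rectangles and of $f^{-1}(0)$ into at most $2^{D(f)}$ $0$-rectangles, giving the bound. The only nontrivial point is the rectangle property itself, and its proof rests on the defining feature of a communication protocol that each message is a function of the speaker's input and the public history alone, with no direct dependence on the other party's input; this is exactly what makes $A_{t'b}$ (and symmetrically $B_{t'b}$) well-defined and keeps the rectangular structure preserved through the induction.
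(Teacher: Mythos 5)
The paper gives no proof of this lemma; it is quoted from Kushilevitz--Nisan, so there is nothing internal to compare against. Your argument is the standard and correct one: the protocol tree has at most $2^{D(f)}$ leaves, each leaf's preimage is a combinatorial rectangle by the induction on transcripts (using that each message depends only on the speaker's input and the history), and correctness makes each such rectangle $f$-monochromatic, so $C^1(f)\le 2^{D(f)}$ and $C^0(f)\le 2^{D(f)}$. Note that you actually prove $D(f)\ge\max\{\log C^1(f),\log C^0(f)\}$, which is the bound the paper really needs (it later invokes $D(\text{DISJ}_\lambda')\ge\log C^1(\text{DISJ}_\lambda')$ in Theorem \ref{th3}); the extra outer logarithm in the lemma as printed appears to be a typo, and your remark that the stated (weaker) inequality follows from your bound is the right way to handle it. The one small point worth making explicit is that the set of reachable transcripts is prefix-free, which is why ``length at most $D(f)$'' bounds their number by $2^{D(f)}$ rather than by the number of all binary strings of length up to $D(f)$; this is immediate from the leaves of a binary tree of depth $D(f)$, as you implicitly use.
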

\begin{Rm}
 For a promise problem  $P$ to compute the value of a partial
function $f: X\times Y\to \{0,1\}$, a rectangle $R=A\times B$ is called
$1(0)$-rectangle if  the value of $f(x,y)$ is 1(0) for every $(x,y)\in A\times B$ that is a promise input and we do not care about  values for $(x,y)\in A\times B$ that are not promise inputs.
The above lemma still holds for promise problems (that is for partial functions).
\end{Rm}

\subsection{One-way finite automata with quantum and classical state (1QCFA)}

In this subsection we recall the definition of 1QCFA.  Concerning basic concepts and notations of quantum information processing, we refer the reader to \cite{Gru99,Nie00}, and concerning more on classical and quantum automata \cite{Gru99,Gru00,Hop79,Qiu12}.

{\em Two-way finite automata with quantum and classical states} were introduced by Ambainis and Watrous \cite{Amb02} and explored also by Yakaryilmaz,  Qiu, Zheng and others \cite{Yak10,Zhg12,ZhgQiu11,Zhg13}. Informally, a 2QCFA can be seen as a 2DFA with an access to a quantum memory for states of a fixed Hilbert space upon which at each step either a unitary operation is performed or a projective measurement and the outcomes of which then probabilistically determine the next move of the underlying 2DFA. 1QCFA are one-way versions of 2QCFA \cite{ZhgQiu112}. In this paper, we only use 1QCFA in which a unitary transformation is applied in every step after scanning a symbol and an measurement is performed only after the scanning of the right end-marker. Such model is a measure-once 1QCFA (MO-1QCFA) and corresponds to MO-1QFA, which can also be seen as a special case of {\em one-way quantum finite automata together with classical states} in \cite{Qiu13}.

\begin{Df}
A measure-once 1QCFA ${\cal A}$ is specified by a 10-tuple
\begin{equation}
{\cal A}=(Q,S,\Sigma,\Theta,\Delta,\delta,|q_{0}\rangle,s_{0},S_{acc},S_{rej})
\end{equation}

\begin{enumerate}
\item $Q$ is a finite set of orthonormal quantum (basis) states.
\item $S$ is a finite set of classical states.
\item $\Sigma$ is a finite alphabet of input symbols and let
$\Sigma'=\Sigma\cup \{|\hspace{-1.5mm}c,\$\}$, where $|\hspace{-1.5mm}c$ will be used as the left end-marker and $\$$ as the right end-marker.
\item $|q_0\rangle\in Q$ is the initial quantum state.
\item $s_0$ is the initial classical state.
\item $S_{acc}\subset S$ and $S_{rej}\subset S$, where $S_{acc}\cap S_{rej}=\emptyset$ are  sets of
the classical accepting and rejecting states, respectively.
\item $\Theta$ is a quantum transition function
\begin{equation}
\Theta: S\setminus(S_{acc}\cup S_{rej})\times \Sigma'\to U({\cal H}(Q)),
\end{equation}
where $U({\cal H}(Q))$ is set of unitary operations  on the Hilbert space generated by quantum states from $Q$.

\item $\delta$ is a classical transition function
\begin{equation}
\delta: S\setminus(S_{acc}\cup S_{rej})\times \Sigma'\to S.
\end{equation}
If $\delta(s,\sigma)=(s')$, then the new classical state of the automaton is $s'$.

\item $\Delta$ is the mapping:
\begin{equation}
\Delta:S \to O({\cal H}(Q)),
\end{equation}
where $O({\cal H}(Q))$ is set of projective measurements on the Hilbert space generated by quantum states from $Q$.
\end{enumerate}
\end{Df}
The computation of an MO-1QCFA
${\cal A}=(Q,S,\Sigma,\Theta,\Delta,\delta,|q_{0}\rangle,s_{0},S_{acc},S_{rej})$ on an input $w=\sigma_1 \cdots\sigma_n\in \Sigma^*$ starts with the string $|\hspace{-1.5mm}cx\$$ on the input tape. At the start, the tape head of the automation is positioned on the left end-marker and the automaton begins the computation in the  initial classical state and
in the initial quantum state. After that,
in each  step, if  its  classical state  is $s$, its tape head reads a symbol $\sigma$ and its quantum state is $|\psi\rangle$, then the automaton changes its quantum state to $\Theta(s,\sigma)|\psi\rangle$ and its classical state to $\delta(s,\sigma)$. At the end of the computation, a projective measurement, which has two possible classical outcomes $a$ and $r$,  is applied on the current quantum state. If the classical outcome is $a$ ($r$), then the input is accepted (rejected).

  For any string $w\in (\Sigma')^*$ and any $\sigma\in \Sigma$,
let $\widehat{\delta}(s,\sigma
w)=\widehat{\delta}(\delta(s,\sigma),w)$; if $|w|=0$,
$\widehat{\delta}(s,w)=s$. Let $\sigma_0=|\hspace{-1.5mm}c$ and $\sigma_{n+1}=\$$. Assume that $\widehat{\delta}(s_0,\sigma_0\cdots\sigma_i)=s_{i+1}$.  Suppose the measurement is $M=\{P_a, P_r\}$, then the probability that ${\cal A}$ accepts the input
\begin{equation}
Pr[{\cal A}\ \  \text{accepts}\  w] =\|P_{a}U(s_{n+1},\sigma_{n+1})\cdots U(s_1,\sigma_1)U(s_0,\sigma_0)|q_0\rangle\|^2.
\end{equation}
The probability that ${\cal A}$ rejects the input is $Pr[{\cal A}\ \  \text{rejects}\  w]=1-Pr[{\cal A}\ \  \text{accepts}\  w]$.

Language acceptance is a special case of so called promise problem solving.
A {\em promise problem} is a pair $A = (A_{yes}, A_{no})$, where $A_{yes}$, $A_{no}\subset \Sigma^*$
are disjoint sets. Languages may be viewed as promise problems that obey the additional constraint
$A_{yes}\cup A_{no}=\Sigma^*$.

A 1DFA ${\cal A}$ solves a promise problem $A = (A_{yes}, A_{no})$ (exactly) if
\begin{enumerate}
\item[1.] $\forall w\in A_{yes}$, ${\cal A}$ accepts $w$, and
\item[2.] $\forall w\in  A_{no}$, ${\cal A}$ rejects $w$.
\end{enumerate}

Let $0<\varepsilon\leq\frac{1}{2}$. An MO-1QCFA ${\cal A}$ solves a promise problem $A = (A_{yes}, A_{no})$ with a one-sided error $\varepsilon$ if
\begin{enumerate}
\item[1.] $\forall w\in A_{yes}$, $Pr[{\cal A}\  \text{accepts}\  w]=1$, and
\item[2.] $\forall w\in  A_{no}$, $Pr[{\cal A}\ \text{rejects}\  w]\geq 1-\varepsilon$.
\end{enumerate}

 A promise problem $A = (A_{yes}, A_{no})$ is solved  by exactly by an  MO-1QCFA ${\cal A}$  if
\begin{enumerate}
\item[1.] $\forall w\in A_{yes}$, $Pr[{\cal A}\  \text{accepts}\  w]=1$, and
\item[2.] $\forall w\in  A_{no}$, $Pr[{\cal A}\ \text{rejects}\  w]=1$.
\end{enumerate}

\section{Communication complexity for a promise version of the disjointness problem}
In this section we give  quantum and probabilistic upper bounds and a deterministic low bound for $\text{DISJ}_{\lambda}'$.
\subsection{Quantum protocol}
We give at first a quantum communication protocol for $\text{DISJ}_{\frac{1}{4}}'(x,y)$.

\begin{Th}\label{th1}
The quantum communication complexity $Q_2(DISJ_{\frac{1}{4}}')\leq 3+2\log n$.
\end{Th}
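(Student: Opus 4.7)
The plan is to adapt the Buhrman--Cleve--Wigderson phase-encoding protocol, originally used for $\text{EQ}'$, to the disjointness promise. The new obstacle, compared with equality, is that the relevant phase $(-1)^{x_i\wedge y_i}$ depends jointly on both inputs and so cannot be applied by Bob using only an index register. I will handle this by augmenting the index with a single data qubit carrying $x_i$.

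The protocol has two rounds. First, Alice prepares the state $|\phi_x\rangle = \frac{1}{\sqrt{n}}\sum_{i=1}^{n}|i\rangle|x_i\rangle$ on $\lceil\log n\rceil+1$ qubits and sends them to Bob. Second, Bob applies the controlled phase $U_y\colon |i\rangle|b\rangle\mapsto (-1)^{b\,y_i}|i\rangle|b\rangle$, producing $|\psi\rangle = \frac{1}{\sqrt{n}}\sum_i (-1)^{x_i\wedge y_i}|i\rangle|x_i\rangle$, and returns the qubits to Alice. Alice then inverts her preparation unitary and measures in the computational basis, outputting $1$ (``disjoint'') iff she observes the all-zeros outcome.

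The analysis is a short amplitude calculation. The probability of the all-zeros outcome equals $|\langle\phi_x|\psi\rangle|^2 = \bigl(\tfrac{1}{n}\sum_i (-1)^{x_i\wedge y_i}\bigr)^2 = (1-2k/n)^2$, where $k=\sum_i x_i\wedge y_i$. If $k=0$ then this probability is $1$, so Alice outputs $1$ with certainty. On the promise window $n/4\le k \le 3n/4$ we have $|1-2k/n|\le 1/2$, hence Alice outputs $1$ with probability at most $1/4<1/3$; the protocol therefore decides $\text{DISJ}'_{1/4}$ with bounded (indeed one-sided) error. Counting the $2(\lceil\log n\rceil+1)$ transmitted qubits, together with one classical bit signalling the outcome, yields the claimed bound $3+2\log n$.

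I do not anticipate a serious obstacle here. The two points needing care are (i) introducing the data ancilla so that Bob's phase gate has access to $x_i$ even though he does not know $x$, and (ii) verifying that $|1-2k/n|\le 1/2$ holds uniformly on the whole promise window, not only at the central value $k=n/2$. The general $\lambda$-bound stated in the introduction would follow by running this protocol $\Theta(1/\lambda)$ times and taking the AND of the outcomes, but that amplification is unnecessary for $\lambda=1/4$.
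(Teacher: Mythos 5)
Your protocol is correct and is essentially the paper's own: your state $\frac{1}{\sqrt n}\sum_i|i\rangle|x_i\rangle$ is exactly the paper's $\frac{1}{\sqrt n}(\overline{x}_1,\dots,\overline{x}_n,x_1,\dots,x_n)^T$ produced by $U_sU_x$, Bob's controlled phase is the paper's $V_y$, and the inversion-plus-measurement and the $2\log(2n)+1$ accounting match the paper's Step 3 and cost analysis. The amplitude calculation and the bound $|1-2k/n|\le 1/2$ on the promise window are likewise identical to the paper's.
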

\begin{proof}
Assume that Alice is given an input $x=x_1,\cdots,x_n$ and Bob an input $y=y_1,\cdots,y_n$. The quantum communication protocol ${\cal P}$ works as follows:
\begin{enumerate}
  \item Alice starts with a quantum state $|\psi_0\rangle=|1,0\rangle=(1, \overbrace{0,\cdots,0}^{2n-1})^T$ and applies the following unitary transformation $U_s$:
  \begin{equation}
    U_s|\psi_0\rangle=\sum_{i=1}^n\frac{1}{\sqrt{n}}|i,0\rangle=\frac{1}{\sqrt{n}}(\overbrace{1,\cdots,1}^{n}, \overbrace{0,\cdots,0}^{n})^T.
  \end{equation}
  Alice then applies  unitary transformation $U_x$ according to her input $x=x_1,\cdots,x_n$:
  \begin{equation}
    U_x=U_{x_n}\cdots U_{x_1}
  \end{equation}
  where
   \begin{equation}
    U_{x_i}=\left\{\begin{array}{ll}
                    I &\ \text{if}\  x_i=0 \\
                    |i,1\rangle\langle i,0|+|i,0\rangle\langle i,1| +\sum_{j\neq i}|j,0\rangle\langle j,0|+\sum_{j\neq i}|j,1\rangle\langle j,1|&\ \text{if}\ x_i=1\\
                  \end{array}
 \right..
\end{equation}

$U_x$ can be seen as a unitary transformation that exchanges the amplitudes of $|i,0\rangle$ and $|i,1\rangle$ if $x_i=1$.
  The quantum state after performing $U_x$ is
    \begin{equation}
    |\psi_1\rangle=\frac{1}{\sqrt{n}}(\overline{x}_1,\cdots,\overline{x}_n,x_1,\cdots,x_n)^T,
  \end{equation}
    where $\overline{x}_i=1-x_i$.
  Alice then sends her quantum state $|\psi_1\rangle$ to Bob.
  \item Bob then applies to the state received a  unitary mapping $V_y$, defined for his input $y$ as follows
   \begin{equation}
    V_y|i,0\rangle=|i,0\rangle
  \end{equation}
  and
   \begin{equation}
    V_y|i,1\rangle=(-1)^{y_i}|i,1\rangle.
  \end{equation}
  The quantum state after applying $V_y$ is
   \begin{align}
    |\psi_2\rangle=\frac{1}{\sqrt{n}}(\overline{x}_1,\cdots,\overline{x}_n,(-1)^{y_1}x_1,\cdots,(-1)^{y_n}x_n)^T.
  \end{align}
  If $x_i=y_i=1$, then $(-1)^{y_i}x_i=-1=(-1)^{x_i\wedge y_i}$; if $x_i=1$ and $y_i=0$, then $(-1)^{y_i}x_i=1=(-1)^{x_i\wedge y_i}$; otherwise  $(-1)^{y_i}x_i=0$.
  \item Alice applies the unitary transformation $U_x$ to the quantum state $|\psi_2\rangle$ received from Bob and gets a new quantum state:
    \begin{equation}
    |\psi_{3}\rangle=\frac{1}{\sqrt{n}}( z_1,\cdots,z_n,\overbrace{0,\cdots,0}^{n})^T.
  \end{equation}
  If $x_i=0$, then $z_i=\overline{x}_i=1=(-1)^{x_i\wedge y_i}$. If $x_i=1$, then $z_i=(-1)^{y_i}x_i=(-1)^{x_i\wedge y_i}$.
  Therefore, $z_i=(-1)^{x_i\wedge y_i}$ for $1\leq i\leq n$.

  Alice then applies the unitary transformation $U_f$ ($U_f$  which will be specified later)  to get the follow state:
   \begin{equation}
    U_f |\psi_{3}\rangle=(  \frac{1}{n}\sum_{i=1}^n (-1)^{x_i\wedge y_i}, \overbrace{ *,\cdots, *}^{2n-1}  )^T.
  \end{equation}
and then measures the final quantum state with $\{|i,0\rangle\langle i,0|,|i,1\rangle\langle i,1| \}_{i=1}^n$. If the measurement outcome is $|1,0\rangle$, then Alice sends 1 to Bob; otherwise, Alice sends 0 to Bob.
\end{enumerate}

It is clear that this protocol communicates $1+2(\log{2n})=3+2\log n$ qubits.   Unitary transformations $U_{s}$ and $U_{f}$ do exist. The first column of $U_{s}$ is $\frac{1}{\sqrt{n}}(\overbrace{1,\cdots,1}^{n},\overbrace{0,\cdots,0}^n)^T$ and the first row of $U_f$ is $\frac{1}{\sqrt{n}}(\overbrace{1,\cdots,1}^{n},\overbrace{0,\cdots,0}^n)$.
It is easy to verify that $V_y$ is a unitary transformation.

If $\sum_{i=1}^n x_i\wedge y_i=0$, then $ \frac{1}{n}\sum_{i=1}^n (-1)^{x_i\wedge y_i}=1$. After the measurement, Alice gets the quantum outcome $|1,0\rangle$ and sends 1 to Bob with certainty. Thus,
 \begin{equation}
    Pr({\cal P}(x,y)=\text{DISJ}_{\frac{1}{4}}'(x,y))=1.
  \end{equation}

If $n/4\leq \sum_{i=1}^n x_i\wedge y_i\leq 3n/4$, then $|\frac{1}{n}\sum_{i=1}^n (-1)^{x_i\wedge y_i}|\leq 1/2$.  Alice gets the quantum outcome
$|1,0\rangle$ with probability not more than $|\frac{1}{n}\sum_{i=1}^n (-1)^{x_i\wedge y_i}|^2=1/4$. Thus,
 \begin{equation}
    Pr({\cal P}(x,y)=\text{DISJ}_{\frac{1}{4}}'(x,y))=1-|\frac{1}{n}\sum_{i=1}^n (-1)^{x_i\wedge y_i}|^2 \geq \frac{3}{4}.
  \end{equation}
  Therefore ${\cal P}$ is a two-sided error quantum protocol for $\text{DISJ}_{\frac{1}{4}}'(x,y))$ and $Q_2(DISJ_{\frac{1}{4}}')\leq 3+2\log n$.
\end{proof}

\begin{Th}\label{q-g}
The  quantum communication complexity $Q_2(DISJ_{\lambda}')\leq \frac{\log 3}{3\lambda}(3+2\log n)$.
\end{Th}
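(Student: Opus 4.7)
The plan is to amplify the protocol ${\cal P}$ from the proof of Theorem~\ref{th1} by independent repetition, and then to pick the number of repetitions so as to convert the weaker $\lambda$-promise into the standard bounded-error guarantee.

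First I would observe that the protocol ${\cal P}$ of Theorem~\ref{th1} is in fact $\lambda$-oblivious: at the end of a single run Alice measures outcome $|1,0\rangle$ with probability exactly $\bigl|\tfrac{1}{n}\sum_{i=1}^n(-1)^{x_i\wedge y_i}\bigr|^2 = (1-2s/n)^2$, where $s=\sum_{i=1}^n x_i\wedge y_i$. On a promise input for $\text{DISJ}'_\lambda$, this probability equals $1$ when $s=0$ and is at most $(1-2\lambda)^2$ whenever $\lambda n \le s \le (1-\lambda)n$. Hence a single run of ${\cal P}$ already yields a one-sided-error protocol for $\text{DISJ}'_\lambda$ whose error on ``no'' inputs is at most $(1-2\lambda)^2$, at cost $3+2\log n$ qubits.

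Next I would amplify one-sidedly: run ${\cal P}$ independently $k$ times and have Alice output $1$ iff every one of the $k$ measurements yields $|1,0\rangle$ (and $0$ otherwise). The yes-side remains exact; on the no-side the error drops to at most $(1-2\lambda)^{2k}$, while the total communication grows to $k(3+2\log n)$ qubits. Taking $k=\lceil \log 3/(3\lambda)\rceil$ then matches the stated bound up to a negligible ceiling.

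The only remaining piece, and the one mild analytic obstacle I foresee, is to verify that this choice of $k$ actually brings the error below $1/3$; this amounts to checking the elementary inequality $-2\log_2(1-2\lambda)\ge 3\lambda$ on $(0,\tfrac14]$. Using $-\log_2(1-x)\ge x/\ln 2$ for $x\in[0,1)$, one gets $-2\log_2(1-2\lambda)\ge 4\lambda/\ln 2>3\lambda$, so
\[
(1-2\lambda)^{2k}\le 2^{-3\lambda k}\le 2^{-\log_2 3}=\tfrac{1}{3},
\]
as required. Everything else is a direct black-box reuse of Theorem~\ref{th1}.
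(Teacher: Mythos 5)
Your proposal is correct and follows essentially the same route as the paper: both amplify the Theorem~\ref{th1} protocol by $k$ independent repetitions, accepting only if every run returns $|1,0\rangle$, and both use the inequality $1-u\leq 2^{-u}$ to bound the required $k$ by $\frac{\log 3}{3\lambda}$. Your bookkeeping via the single-run acceptance probability $(1-2\lambda)^2$ rather than the paper's rejection probability $p_0\geq 3\lambda$ is only a cosmetic reorganization of the same calculation, and your explicit acknowledgment of the ceiling on $k$ is, if anything, slightly more careful than the paper.
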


\begin{proof}
For general cases,  the new quantum protocol  ${\cal P}'$ works as follows:
Repeat the protocol ${\cal P}$ from the proof of the previous theorem  $k$ times ($k$ will be specified later). If all the measurement outcomes in Step 3 are $|1,0\rangle$, then ${\cal P}'(x,y)=1$; otherwise, ${\cal P}'(x,y)=0$.

If $\sum_{i=1}^n x_i\wedge y_i=0$, then
\begin{equation}
    Pr({\cal P}(x,y)=1)=1
\end{equation}
 and
 \begin{equation}
    Pr({\cal P}(x,y)=0)=0.
\end{equation}
Therefore,
 \begin{align}
    Pr({\cal P}'(x,y)&=\text{DISJ}_{\lambda}'(x,y)=1)=1.
\end{align}

If $\lambda n\leq \sum_{i=1}^n x_i\wedge y_i\leq (1-\lambda)n$, then
  \begin{align}
    p_0&=Pr({\cal P}(x,y)=\text{DISJ}_{\lambda}'(x,y)=0)=1-|\frac{1}{n}\sum_{i=1}^n (-1)^{x_i\wedge y_i}|^2 \geq 1-|1-2\lambda|^2\\
    &=4\lambda-\lambda^2=4\lambda(1-\lambda)\geq 4\lambda(1-\frac{1}{4})=3\lambda.
\end{align}
 If $k=\frac{\log 1/3}{\log (1-3\lambda)}$, and the protocol is repeated ${\cal P}$ $k$ times, then
 \begin{align}
    Pr({\cal P}'(x,y)&=\text{DISJ}_{\lambda}'(x,y)=0)=1-(1-p_0)^k\geq 1-(1-3\lambda)^k\geq 1-(1-3\lambda)^{\frac{\log 1/3}{\log (1-3\lambda)}}\\
    &=1-2^{\log ((1-3\lambda)^{\frac{\log 1/3}{\log (1-3\lambda)}})}=1-2^{\frac{\log 1/3}{\log (1-3\lambda)}\times \log ((1-3\lambda)}=1-2^{\log{1/3}}=\frac{2}{3}
    .
\end{align}
Since for any real number $u>0$, $1-u\leq e^{-u}\leq 2^{-u}$.  We have,
 \begin{align}
  k=\frac{\log 1/3}{\log (1-3\lambda)}\leq \frac{\log 1/3}{\log 2^{(-3\lambda)}}=\frac{\log 3}{3\lambda}.
\end{align}
Thus, $Q_2(\text{DISJ}_{\lambda}')\leq \frac{\log 3}{3\lambda}(3+2\log n)$.

\end{proof}

\subsection{Deterministic lower bound}

\begin{Th}\label{th3}
The  deterministic communication complexity $D(DISJ_{\lambda}')\in {\bf \Omega}(n)$.
\end{Th}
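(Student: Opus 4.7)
The plan is to invoke Lemma 1 together with its extension to promise problems (stated in the Remark), for which it suffices to exhibit a fooling set of 1-inputs of $\text{DISJ}_\lambda'$ of cardinality $2^{\Omega(n)}$. A set $S$ of 1-inputs $\{(x^{(i)},y^{(i)})\}_{i=1}^N$ is a fooling set precisely when, for every $i\neq j$, at least one of $(x^{(i)},y^{(j)})$ or $(x^{(j)},y^{(i)})$ is a promise 0-input, so no 1-rectangle can contain two elements of $S$; this gives $C^1(\text{DISJ}_\lambda')\geq N$ and then $D(\text{DISJ}_\lambda')\geq\log N$.

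First I would build $S$ probabilistically: sample independent uniform $x^{(1)},\ldots,x^{(N)}\in\{0,1\}^n$ and set $y^{(i)}=\overline{x^{(i)}}$, the bitwise complement. Each pair is automatically disjoint, hence a 1-input. For $i\neq j$, the cross-intersection $|x^{(i)}\cap\overline{x^{(j)}}|$ counts the positions where $x^{(i)}_k=1$ and $x^{(j)}_k=0$, so it is distributed as $\text{Binomial}(n,1/4)$, since each coordinate contributes independently with probability $1/4$.

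Next I would use Chernoff concentration: for any fixed $\lambda<1/4$ there is a constant $c=c(\lambda)>0$ with
\[
\Pr\bigl[\,|x^{(i)}\cap\overline{x^{(j)}}|\notin[\lambda n,(1-\lambda)n]\,\bigr]\;\le\;2^{-cn+1},
\]
because $\lambda n$ is bounded away from the mean $n/4$ (and symmetrically for the upper tail). Setting $N=2^{cn/3}$ and taking a union bound over the $N(N-1)$ ordered cross-pairs, a positive fraction of random samples yields a configuration in which every cross-pair is a promise 0-input. Thus $C^1(\text{DISJ}_\lambda')\geq N=2^{\Omega(n)}$ and Lemma 1 together with its promise-problem extension gives $D(\text{DISJ}_\lambda')\in\Omega(n)$.

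The main obstacle is the endpoint $\lambda=1/4$, where the threshold equals the mean of the binomial and the tail bound above degrades to a constant. To recover the statement there I would replace the plain random construction by one tailored to this boundary: either perturb by restricting $x^{(i)}$ to a constant-weight shell and using a coding-theoretic selection that forces $|x^{(i)}\cap\overline{x^{(j)}}|\geq n/4$, or argue that among the two cross-pairs $(x^{(i)},y^{(j)})$, $(x^{(j)},y^{(i)})$ at least one is a 0-input with probability bounded away from zero (so a Lovász-local-lemma / alteration step still yields a fooling set of size $2^{\Omega(n)}$, with a weaker constant such as the $0.007\,n$ figure quoted in the introduction). The overall architecture — Lemma 1 plus a fooling set of exponential size — is unchanged.
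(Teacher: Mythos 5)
Your construction is correct and pleasantly elementary for every fixed $\lambda<1/4$: the cross-intersection $|x^{(i)}\cap\overline{x^{(j)}}|$ is indeed $\mathrm{Binomial}(n,1/4)$, both thresholds $\lambda n$ and $(1-\lambda)n$ are then bounded away from the mean $n/4$, and Chernoff plus a union bound over $N^2$ pairs with $N=2^{c(\lambda)n/3}$ goes through. The genuine gap is precisely the endpoint $\lambda=1/4$, which you flag but do not actually repair --- and it is part of the theorem (the problem is defined for $0<\lambda\le\frac14$) and is the very instance the paper later feeds into the automata application. Neither of your two patches works. For the constant-weight patch: if all $x^{(i)}$ have weight $n/2$, then $|x^{(i)}\cap\overline{x^{(j)}}|=n/2-|x^{(i)}\cap x^{(j)}|$ (and the two cross-intersections of a pair coincide), so forcing this to be $\ge n/4$ forces all pairwise intersections $|x^{(i)}\cap x^{(j)}|\le n/4$; in the $\pm1$ encoding this says all pairwise inner products are $\le 0$, and at most $2n$ vectors of $\mathbb{R}^n$ can have pairwise non-positive inner products, so no such family of size $2^{\Omega(n)}$ exists. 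Other weight shells $\alpha n$ fare no better, since the required intersection bound $(\alpha-\frac14)n$ never exceeds the typical intersection $\alpha^2 n$ (as $(\alpha-\frac12)^2\ge 0$) and a Plotkin-type bound then caps the family at polynomial size. For the local-lemma/alteration patch: with $y^{(i)}=\overline{x^{(i)}}$ the event that \emph{neither} cross-pair of $\{i,j\}$ lands in $[n/4,3n/4]$ has probability bounded below by an absolute constant (each cross-intersection sits exactly at the mean of its binomial), each such bad event shares a variable with $\Theta(N)$ others, and a fooling set must avoid all of them; the local lemma needs $p\cdot d=O(1)$ and alteration would delete $\Theta(N^2)$ elements from a set of size $N$, so neither survives a constant $p$.

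The paper closes the endpoint by weakening ``one 1-rectangle contains at most one point of the set'' to a corruption-type bound: it takes $E=\{(x,\overline{x})\,:\,\lambda n\le H(x)\le(1-\lambda)n\}$ and shows every $1$-rectangle contains fewer than $1.99^n$ points of $E$, whence $C^1(\text{DISJ}_\lambda')\ge |E|/1.99^n\ge 2^{n-1}/1.99^n$. The tool that makes this work at $\lambda=\frac14$ is the Frankl--R\"odl forbidden-intersection theorem \cite{Fr87}: any family of at least $1.99^n$ subsets of $[n]$ contains two sets whose intersection is \emph{exactly} $n/4$, and from two such diagonal points $(x,\overline{x}),(z,\overline{z})$ in a common rectangle one manufactures the off-diagonal point $(z,\overline{x})$, which is a promise $0$-input. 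If you want to keep your architecture at the boundary, you need this theorem or a comparably strong substitute; an unstructured probabilistic construction does not reach $\lambda=\frac14$.
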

\begin{proof}
Let ${\cal P}$ be a deterministic protocol for $\text{DISJ}_{\lambda}'$. We consider the set $F_{\lambda}=\{x\in\{0,1\}^n \,|\, \lambda n\leq   H(x)\leq (1-\lambda)n\}$, where $H(x)$ is the Hamming weight of $x$.
If $x\in F_{\lambda}$, then also $\overline{x}\in F_{\lambda}$, where $\overline{x}=\overline{x}_1\cdots \overline{x}_n$.
If the set $E=\{(x,\overline{x}) \,|\, x\in F_{\lambda}\}$, then for every $(x,\overline{x})\in E$, ${\cal P}(x,\overline{x})=1$.
Suppose there is a 1-monochromatic rectangle $R=A\times B\subseteq \{0,1\}^n\times\{0,1\}^n$ such that ${\cal P}(x,y)=1$ for every $(x,y)\in R$. For $S=R\cap E$,  we now prove that $|S|<1.99^n$.

Suppose $|S|\geq 1.99^n$. According to Corollary 1.2 from \cite{Fr87},  then there exist $(x,\overline{x})\in S$ and $(z,\overline{z})\in S$ such that $x\wedge z=\frac{n}{4}$, where $x\wedge z=\sum_{i=1}^n x_i\wedge z_i$. Since $S\subseteq E$, we have $x, \overline{x},z,\overline{z}\in F_{\lambda}$. Without a lost of generality, let
\begin{align}
    x&=\overbrace{1\cdots 1}^{n/4}\ \overbrace{ 0\cdots 0}^{\lambda n}\  \overbrace{ 1\cdots 1}^{\lambda n}\ \overbrace{ *\cdots *}^{3n/4-2\lambda n},\text{and}\\
    z&=\overbrace{1\cdots 1}^{n/4}\ \overbrace{ 1\cdots 1}^{\lambda n}\  \overbrace{ 0\cdots 0}^{\lambda n}\ \overbrace{ *\cdots *}^{3n/4-2\lambda n}.
\end{align}
In such a case
\begin{align}
    \overline{x}=\overbrace{0\cdots 0}^{n/4}\ \overbrace{ 1\cdots 1}^{\lambda n}\  \overbrace{ 0\cdots 0}^{\lambda n}\ \overbrace{ *\cdots *}^{3n/4-2\lambda n}
\end{align}
and therefore $\lambda n \leq z\wedge \overline{x}\leq 3n/4-\lambda n <(1-\lambda) n$. Thus, ${\cal P}(z,\overline{x})=0$. Since $S\subset R$ and $R$ is 1-rectangle, we get $(x,\overline{x})\in R, (z,\overline{z})\in R$ and therefore also $(z,\overline{x})\in R$~--~ a contradiction.

Therefore, the minimum number of 1-monochromatic rectangles that partition the space of inputs is
 \begin{align}
     C^1(\text{DISJ}_{\lambda}')\geq \frac{|E|}{|S|}= \frac{|F_{\lambda}|}{|S|}\geq \frac{|F_{1/4}|}{|S|}\geq \frac{2^n/2}{1.99^n}.
\end{align}
According to Lemma \ref{lm-d-lowbound}, the deterministic communication complexity is then
 \begin{align}
D(\text{DISJ}_{\lambda}')\geq \log{C^1(\text{DISJ}_{\lambda}')}\geq \log{(\frac{2^n/2}{1.99^n})}=n-1-n\log{1.99}>n-1-0.9927n=0.0073n-1.
 \end{align}
 Thus, $D(\text{DISJ}_{\lambda}')\in {\bf \Omega}(n)$.
\end{proof}

\begin{Rm}
The lower bound that was proved in the previous theorem is quite a rough bound. We expect that the low bound is relative to $\lambda$. When $\lambda$ is close to 0, then the low bound is expected to be close to $n$ instead of 0.007n.
\end{Rm}

\subsection{Probabilistic protocol}
As already mentioned,  the two-sided error  probabilistic communication complexity  $R_2(\text{DISJ}) \in{\bf \Omega}(n)$.  However, for $\text{DISJ}'$, the  communication complexity is dramatically improved as will now be shown

\begin{Th}
The one-way probabilistic communication complexity $R_2^1(DISJ_{\frac{1}{4}}')\leq 5\log{n}$.
\end{Th}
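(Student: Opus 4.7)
The plan is to use straightforward random sampling. Alice picks $k$ indices $i_1,\ldots,i_k\in\{1,\ldots,n\}$ independently and uniformly at random and sends Bob the list of pairs $(i_j,x_{i_j})$ for $j=1,\ldots,k$; this costs $k(\lceil\log n\rceil+1)$ bits. Bob outputs $0$ if there exists some $j$ with $x_{i_j}=y_{i_j}=1$ and outputs $1$ otherwise.

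For correctness on yes-instances, if $\sum_i x_i\wedge y_i=0$ then no index is a ``hit'', so Bob outputs $1$ deterministically. For no-instances, if $n/4\le \sum_i x_i\wedge y_i$, then a uniformly random index $i$ satisfies $x_i\wedge y_i=1$ with probability at least $1/4$, so the probability that none of the $k$ samples is a hit is at most $(3/4)^k$. With $k=4$ this is $81/256<1/3$, giving success probability at least $2/3$ on no-instances. Hence the protocol is a valid two-sided error one-way protocol.

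For the bit count, with $k=4$ the message length is $4(\lceil\log n\rceil+1)$, and a short calculation shows this is at most $5\log n$ once $n$ is moderately large (say $n\ge 16$); for the remaining finite set of small values of $n$, the trivial bound $R_2^1(\text{DISJ}_{1/4}')\le n+1$ already satisfies $n+1\le 5\log n$ or else can be absorbed by adjusting the protocol to just send all of $x$. Combining the two regimes gives $R_2^1(\text{DISJ}_{1/4}')\le 5\log n$.

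The main obstacle is not conceptual but quantitative: we need to check that the constants line up so that $k(\log n+1)\le 5\log n$ with the chosen $k$ still delivering error probability $<1/3$. Since $(3/4)^4<1/3$ is a comfortable margin, sampling four times suffices, and the $+k$ overhead from encoding the bit $x_{i_j}$ alongside each index is what uses the fifth $\log n$ factor. A small generalization of this argument (sample $k=\lceil\log 3/\log(1/(1-\lambda))\rceil$ indices) would also recover the general claim $R_2^1(\text{DISJ}_\lambda')\le \frac{\log 3}{\lambda}\log n$ mentioned in the introduction.
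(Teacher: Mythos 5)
Your protocol is correct in substance and is essentially the paper's argument: detect an intersection by random sampling, using the promise that a no-instance has at least $n/4$ intersecting coordinates so each sample hits with probability at least $1/4$, and amplify with $O(1)$ samples. The one substantive difference is in what Alice samples: the paper has Alice sample only among the positions where $x_i=1$ (sending just the indices, with a separate trivial branch when $H(x)<k$), so the value $x_{i_j}=1$ is implicit and each sample costs exactly $\log n$ bits, giving $5\log n$ on the nose with $k=5$; you sample uniformly over all of $\{1,\dots,n\}$ and transmit the pairs $(i_j,x_{i_j})$, paying $\log n+1$ bits per sample. That extra bit is why you need the two-regime accounting, and as written it does not quite close: for instance at $n=23$ one has $4(\lceil\log n\rceil+1)=24>5\log 23\approx 22.6$ and also $n+1=24>22.6$, so neither branch gives the claimed bound. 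This is a cosmetic constant-chasing issue rather than a conceptual gap --- either adopt the paper's trick of sampling only 1-positions, or simply read the bound asymptotically --- and your closing remark correctly identifies how the same argument generalizes to $R_2^1(\text{DISJ}_\lambda')\leq \frac{\log 3}{\lambda}\log n$.
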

\begin{proof}
Let us consider the one-way probabilistic protocol ${\cal P}^1$ which works as follows:
\begin{enumerate}
  \item If $H(x)<k$, then Alice just sends 1 to Bob.  Otherwise, Alice chooses randomly $k$ bits of `1's of her input, says $x_{i_1},\cdots, x_{i_k}$,  and sends the positions $i_1,\cdots, i_k$ to Bob.
  \item If Bob just receives a bit of `1', then ${\cal P}^1(x,y)=1$. Otherwise, Bob checks the positions $i_1,\cdots, i_k$ of his input. If there exists a $j$ $(1\leq j\leq k)$ such that $y_{i_j}=1$, then ${\cal P}^1(x,y)=0$; otherwise ${\cal P}^1(x,y)=1$.
\end{enumerate}

If $\sum_{i=1}^n x_i\wedge y_i=0$, then
\begin{equation}
    Pr({\cal P}^1(x,y)=\text{DISJ}_{\frac{1}{4}}'(x,y)=1)=1.
\end{equation}

If $n/4\leq \sum_{i=1}^n x_i\wedge y_i\leq 3n/4$, then for any $i\in\{i_1,\cdots,i_k\}$
\begin{equation}
    Pr(y_{i}=x_i)\geq \frac{1}{4}.
\end{equation}
Therefore,
\begin{equation}
   Pr({\cal P}^1(x,y)=0)\geq 1-(1-{\frac{1}{4}})^k=1-(\frac{3}{4})^k.
\end{equation}
If $k=5$, then $Pr({\cal P}^1(x,y)=0)\geq \frac{2}{3}$. Since Alice needs $\log{n}$ bits to specifies every position, we have $R_2^1(\text{DISJ}_{\frac{1}{4}}')\leq 5\log{n}$.
\end{proof}

\begin{Th}\label{p-g}
The one-way probabilistic communication complexity $R_2^1(DISJ_{\lambda}')\leq \frac{\log{3}}{\lambda}\log{n}$.
\end{Th}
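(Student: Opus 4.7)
The plan is to run the same one-way protocol as in the preceding theorem, but with the constant $k=5$ replaced by a $\lambda$-dependent parameter. Concretely, Alice picks $k$ indices uniformly at random (without replacement, or with replacement — either will work) from the support $\{i : x_i = 1\}$ of her input and sends the $k$ indices to Bob. Bob outputs $0$ if any of the sampled positions satisfies $y_{i_j}=1$, and outputs $1$ otherwise. The corner case in which $H(x)<k$ is handled exactly as before by letting Alice send a single $1$; under the promise this case is only consistent with $\sum_i x_i\wedge y_i = 0$ (since otherwise $H(x)\geq \lambda n$), so no error is introduced for large enough $n$.

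The correctness analysis splits into the two promise branches. If $\sum_i x_i\wedge y_i=0$ then no sampled position can hit a coincidence, so Bob outputs $1$ with certainty. If instead $\lambda n\leq \sum_i x_i\wedge y_i\leq (1-\lambda)n$, then a uniformly random index from the support of $x$ lands on an $i$ with $y_i=1$ with probability at least
\begin{equation}
\frac{\sum_{i=1}^n x_i\wedge y_i}{H(x)}\geq \frac{\lambda n}{n}=\lambda,
\end{equation}
so the probability that all $k$ trials miss is at most $(1-\lambda)^k$. Using the elementary inequality $1-\lambda\leq 2^{-\lambda}$ (valid for $0<\lambda\leq 1$), this is bounded by $2^{-\lambda k}$, which is at most $1/3$ as soon as $k\geq (\log 3)/\lambda$. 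Taking $k=\lceil(\log 3)/\lambda\rceil$ therefore yields
\begin{equation}
Pr({\cal P}^1(x,y)=\mathrm{DISJ}_\lambda'(x,y)=0)\geq \tfrac{2}{3},
\end{equation}
and the protocol is a valid one-way two-sided-error probabilistic protocol.

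Finally, each transmitted index is encoded in $\lceil \log n\rceil$ bits, giving a total communication cost of $k\lceil\log n\rceil \leq \frac{\log 3}{\lambda}\log n$ (absorbing the ceiling for large $n$), which is the advertised bound. There is no real obstacle; the only subtlety is to make sure that the denominator $H(x)$ in the success-probability computation cannot inflate unfavourably, and indeed $H(x)\leq n$ is precisely what yields the clean lower bound $\lambda$ on the per-trial hit probability and hence the generalisation of the $\lambda=1/4$ case.
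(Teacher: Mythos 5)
Your proposal is correct and follows essentially the same route as the paper: the same sampling protocol with a $\lambda$-dependent number of samples, the per-trial hit probability bounded below by $\lambda$, and the inequality $1-\lambda\leq 2^{-\lambda}$ used to show $k\approx(\log 3)/\lambda$ repetitions suffice for error $1/3$ (the paper merely applies that inequality after fixing $k=\log(1/3)/\log(1-\lambda)$ rather than before). Your extra care with the $H(x)<k$ corner case and the ceilings is welcome but does not change the argument.
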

\begin{proof}
For this general cases, we use the same protocol as in the proof of the previous theorem, but Alice will  send more positions of her `1' bits input.

If $\sum_{i=1}^n x_i\wedge y_i=0$, then
\begin{equation}
    Pr({\cal P}^1(x,y)=\text{DISJ}_{\frac{1}{4}}'(x,y)=1)=1.
\end{equation}
If $\lambda n\leq \sum_{i=1}^n x_i\wedge y_i\leq (1-\lambda)n$,  then for any $i\in\{i_1,\cdots,i_k\}$
\begin{equation}
    Pr(y_{i}=x_i)\geq \lambda.
\end{equation}
Therefore,
\begin{equation}
   Pr({\cal P}^1(x,y)=0)\geq 1-(1-\lambda)^k.
\end{equation}
If $k=\frac{\log{1/3}}{\log{(1-\lambda)}}$, then $Pr({\cal P}^1(x,y)=0)\geq \frac{2}{3}$. Thus, $R_2^1(\text{DISJ}_{\lambda}')\leq \frac{\log{1/3}}{\log{(1-\lambda)}}\log{n}\leq \frac{\log{3}}{\lambda}\log{n}$.
\end{proof}

\begin{Rm}
If $\lambda$ is close to  $\frac{1}{n}$, then the quantum and probabilistic communication complexity advantages in Theorem \ref{q-g} and Theorem \ref{p-g} disappear.  We can define two-sided error mode as tolerating an error  probability   $\varepsilon$ instead of $\frac{1}{3}$. Modifying our proof in  Theorem \ref{q-g} and Theorem \ref{p-g}, we can get  $Q_2(\text{DISJ}_{\lambda}')\leq \frac{\log \varepsilon}{3\lambda}(3+2\log n)$ and $R_2^1(\text{DISJ}_{\lambda}')\leq \frac{\log{\varepsilon}}{\lambda}\log{n}$ for any error  probability $\varepsilon$. When $\varepsilon$ is close to $\frac{1}{2^n}$, the  the quantum and probabilistic communication complexity advantages in Theorem \ref{q-g} and Theorem \ref{p-g} disappear.
\end{Rm}

\section{Applications to finite automata}

For any $n\in {\mathbb{Z}}^+$,  we consider two promise problems $A_{EQ}(n)$ and $A_{D}(n)$ corresponding to $\text{EQ}'$ and $\text{DISJ}_{\frac{1}{4}}'$ that are defined as follows:
 \begin{align}
&A_{EQ}(n):\left\{\begin{array}{l}
                    A_{yes}(n)=\{x\#y\,|\,x=y,x,y\in\{0,1\}^n\} \\
                    A_{no}(n)=\{x\#y\,|\,x\neq y,x,y\in\{0,1\}^n, H(x,y)= \frac{n}{2}\}
                  \end{array}
 \right.,\\
&A_{D}(n):\left\{\begin{array}{l}
                    A_{yes}(n)=\{x\#y\#x\,|\,\sum_{i=1}^n x_i\wedge y_i=0,x,y\in\{0,1\}^n\} \\
                    A_{no}(n)=\{x\#y\#x\,|\,\frac{1}{4} n\leq \sum_{i=1}^n x_i\wedge y_i\leq \frac{3}{4} n,x,y\in\{0,1\}^n\}
                  \end{array}
 \right..
\end{align}

The quantum protocol for $\text{EQ}'$ which is described  in \cite{Buh09} can be implemented on an MO-1QCFA as shown bellow.
Therefore, we get the following result:

\begin{Th}
The promise problem $A_{EQ}(n)$ can be solved exactly by an MO-1QCFA  ${\cal A}(n)$ with $n$ quantum basis states and ${\bf O}(n)$ classical states, whereas the sizes of the corresponding 1DFA  are $2^{{\bf \Omega}(n)}$.
\end{Th}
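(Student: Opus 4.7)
The plan has two independent parts: implementing the Buhrman et al.\ quantum protocol for $\text{EQ}'$ (see \cite{Buh98,Buh09}) as an MO-1QCFA with the claimed resources, and then reducing from the deterministic communication complexity of $\text{EQ}'$ to obtain the $2^{\mathbf{\Omega}(n)}$ lower bound on 1DFA size.

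For the upper bound, I would construct $\mathcal{A}(n)$ with quantum basis $\{|1\rangle,\ldots,|n\rangle\}$ of size $n$, and with $\mathbf{O}(n)$ classical states $s_0,\,s^x_1,\ldots,s^x_n,\,s^y_1,\ldots,s^y_n,\,s_{\text{final}}$ recording which input position is currently being scanned. On the left end-marker the classical state $s_0$ triggers the unitary $U$ defined by $U|1\rangle=\frac{1}{\sqrt{n}}\sum_{i=1}^n|i\rangle$ (extended arbitrarily to a unitary) and moves control to $s^x_1$. In classical state $s^x_i$, reading $\sigma\in\{0,1\}$ applies the diagonal unitary $Z_i^{\sigma}$, where $Z_i$ multiplies $|i\rangle$ by $-1$ and fixes $|j\rangle$ for $j\neq i$, and advances control to $s^x_{i+1}$. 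Reading $\#$ moves control from $s^x_{n+1}$ to $s^y_1$; bits of $y$ are processed symmetrically by applying $Z_i^{y_i}$ in state $s^y_i$. Finally, the right end-marker applies $U^{-1}$ and enters $s_{\text{final}}$, where $\Delta(s_{\text{final}})=\{|1\rangle\langle 1|,\,I-|1\rangle\langle 1|\}$ accepts on outcome $|1\rangle$ and rejects otherwise.

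Correctness hinges on the mutual commutativity of all the $Z_i$ (simultaneously diagonal in the computational basis), so the cumulative phase on basis state $|i\rangle$ is $(-1)^{x_i\oplus y_i}$. The state just before $U^{-1}$ is therefore $\frac{1}{\sqrt{n}}\sum_i(-1)^{x_i\oplus y_i}|i\rangle$. For yes-inputs all phases are $+1$, this vector equals $U|1\rangle$, and $U^{-1}$ restores $|1\rangle$, yielding acceptance with probability $1$. For no-inputs exactly $n/2$ phases are $-1$, so this state is orthogonal to $U|1\rangle$ and $U^{-1}$ sends it to a vector orthogonal to $|1\rangle$, yielding rejection with probability $1$. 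For the lower bound, suppose a 1DFA $M$ with $k$ states solves $A_{EQ}(n)$. Then Alice, on input $x$, simulates $M$ on $\cent x\#$ and sends Bob $M$'s current state using $\lceil\log k\rceil$ bits; Bob continues the simulation on $y\$$ and outputs $M$'s verdict. Exactness of $M$ on the promise gives a correct (one-way) deterministic protocol for $\text{EQ}'$, so $D(\text{EQ}')\leq\lceil\log k\rceil$; combined with $D(\text{EQ}')\in\mathbf{\Omega}(n)$ recalled in Section~2.1, this forces $k=2^{\mathbf{\Omega}(n)}$.

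The step I expect to require most care is the MO-1QCFA construction, because the protocol of Buhrman et al.\ is phrased interactively with a quantum message travelling Alice$\to$Bob$\to$Alice. What makes the single-pass implementation work is precisely the commutativity of the phase operators $Z_i$, which allows Alice's and Bob's phase contributions to be absorbed into one left-to-right scan driven by the classical control states that track the current input position.
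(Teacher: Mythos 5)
Your proposal is correct and follows essentially the same route as the paper: the same $n$-dimensional quantum register with $\mathbf{O}(n)$ classical position-tracking states, the same Fourier-like $U$/$U^{-1}$ sandwich around commuting diagonal phase flips accumulating $(-1)^{x_i\oplus y_i}$, and the same reduction from a $k$-state 1DFA to a $\log k$-bit deterministic protocol for $\text{EQ}'$ combined with $D(\text{EQ}')\in\mathbf{\Omega}(n)$. The only cosmetic difference is that the paper writes the final unitary $U_f$ as any unitary whose first row is $\frac{1}{\sqrt{n}}(1,\ldots,1)$ and measures in the full basis, which is equivalent to your $U^{-1}$ followed by the two-outcome measurement $\{|1\rangle\langle 1|,\,I-|1\rangle\langle 1|\}$.
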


\begin{figure}[htbp]
\begin{tabular}{|l|}
    \hline

\begin{minipage}[t]{0.93\textwidth}
\begin{enumerate}
\item[1.] Read the left end-marker $\ |\hspace{-1.5mm}c$,  perform $U_s$ on the initial quantum state $|1\rangle$,  change its classical state to $\delta(s_0,\ |\hspace{-1.5mm}c )=s_1$, and move the tape head one cell to the right.

\item[2.] Until the currently  scanned symbol $\sigma$ is not $\#$, do the following:
 \begin{enumerate}
 \item[2.1] Apply $\Theta(s_i,\sigma)=U_{i,\sigma}$ to the current quantum state.
 \item[2.2] Change the classical state $s_i$ to $s_{i+1}$ and move the tape head one cell to the right.
\end{enumerate}
\item[3.] Change the classical state $s_{n+1}$ to  $s_1$ and move the tape head one cell to the right.

\item[4.] While the currently  scanned symbol $\sigma$ is not the right end-marker $\$$, do the following:
 \begin{enumerate}
 \item[2.1] Apply $\Theta(s_i,\sigma)=U_{i,\sigma}$ to the current quantum state.
 \item[2.2] Change the classical state $s_i$ to $s_{i+1}$ and move the tape head one cell to the right.
\end{enumerate}

\item[5.] When the right end-marker  is reached,    perform $U_{f}$ on the current quantum state,
measure the current quantum state with $M=\{P_i=|i\rangle\langle i|\}_{i=1}^{n}$.   If the outcome is $|1\rangle$, accept the input; otherwise reject the input.

\end{enumerate}

\end{minipage}\\

\hline
\end{tabular}
 \centering\caption{  Description of the behavior of ${\cal A}(n)$ when solving the promise problem $A_{EQ}(n)$. }\label{f3}
\end{figure}

\begin{proof}
Let $x=x_1\cdots x_n$ and $y=y_1\cdots y_n$.  Let us consider an MO-1QCFA ${\cal A}(n)$ with $n$   quantum basis states
$\{|i\rangle:i=1,2,\cdots,n\}$. ${\cal A}(n)$ will start in  the
quantum state $|1\rangle=(1,0\cdots,0)^T$. We use classical states $s_i\in S$ ($1\leq i\leq n+1$) to point out the positions of the tape head that will provide some information for  quantum transformations. If  the classical state of ${\cal A}(n)$ will be $s_i$ ($1\leq i\leq n$) that will  mean that the next scanned symbol of the tape head is the $i$-th symbol of $x$($y$) and $s_{n+1}$ means that the next scanned symbol of  the tape head  is  $\#$($\$$).
 The automaton proceeds as shown in Figure \ref{f3}, where
\begin{align}
&U_s|1\rangle=\frac{1}{\sqrt{n}}\sum_{i=1}^n|i\rangle;\\
&U_{i,\sigma}|i\rangle=(-1)^{\sigma}|i\rangle \text{ and }  U_{i,\sigma}|j\rangle=|j\rangle \ \text{for}\ j\neq i\\
&U_f(\sum_{i=1}^n\alpha_i|i\rangle)=(\frac{1}{\sqrt{n}}\sum_{i=1}^n\alpha_i)|1\rangle+\cdots
\end{align}

 Transformations $U_{s}$ and $U_{f}$ are unitary. The first column of $U_{s}$ is $\frac{1}{\sqrt{n}}(1,\cdots,1)^T$ and the first row of $U_f$ is $\frac{1}{\sqrt{n}}(1,\cdots,1)$.

The quantum state after scanning the left end-marker is $|\psi_1\rangle= U_s|1\rangle=\sum_{i=1}^n\frac{1}{\sqrt{n}}|i\rangle$, the quantum state after Step 2 is $|\psi_2\rangle=\sum_{i=1}^n\frac{1}{\sqrt{n}}(-1)^{x_i}|i\rangle$, and the quantum state after Step 4 is $|\psi_{3}\rangle=\sum_{i=1}^n\frac{1}{\sqrt{n}}(-1)^{x_i+y_i}|i\rangle$. The quantum state after scanning  the right end-marker  is therefore
\begin{align}
|\psi_4\rangle=U_f\left(\sum_{i=1}^n\frac{1}{\sqrt{n}}(-1)^{x_i+y_i}|i\rangle\right)=U_f\frac{1}{\sqrt{n}} \left(
                                                  \begin{array}{c}
                                                    (-1)^{x_1+y_1} \\
                                                    (-1)^{x_2+y_2} \\
                                                    \vdots \\
                                                    (-1)^{x_n+y_n}\\
                                                  \end{array}
                                                \right) =\left(
                                                  \begin{array}{c}
                                                    \frac{1}{n} \sum_{i=1}^n (-1)^{x_i+y_i} \\
                                                    \vdots \\
                                                    \vdots \\
                                                  \end{array}
                                                \right).
\end{align}

If the input string $w\in A_{yes}(n)$, then $x_i=y_i$ for $1\leq i\leq n$ and  $|\frac{1}{n} \sum_{i=1}^n (-1)^{x_i+y_i}|^2=1$. The amplitude of $|1\rangle$ is 1, and that means $|\psi_4\rangle=|1\rangle$.
Therefore the input will be accepted with probability 1 at the measurement in Step 5.

If the input string $w\in A_{no}(n)$, then $H(x,y)=\frac{n}{2}$. Therefore   the probability of getting outcome $|1\rangle$ in the measurement in Step 5 is $|\frac{1}{n} \sum_{i=1}^n (-1)^{x_i+y_i}|^2=0$.

The deterministic communication complexity for  $EQ'$is at least $0.007n$ \cite{Buh98,Buh09}. Therefore,  the sizes of the corresponding 1DFA are $2^{{\bf \Omega}(n)}$.
\end{proof}

We implement the quantum protocol used in Theorem \ref{th1} on an MO-1QCFA and prove the following theorem:
\begin{Th}
The promise problem $A_{D}(n)$ can be solved with one-sided error $\frac{1}{4}$ by an MO-1QCFA  ${\cal A}(n)$ with $2n$ quantum basis states and ${\bf O}(n)$ classical states, whereas the sizes of the corresponding 1DFA  are $2^{{\bf \Omega}(n)}$.
\end{Th}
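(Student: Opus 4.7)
The plan is to implement the three-phase quantum communication protocol from the proof of Theorem \ref{th1} directly on an MO-1QCFA, exploiting the fact that the input $x\#y\#x$ naturally supplies the three passes (over $x$, over $y$, over $x$ again) that the protocol performs. The quantum register will carry the same $2n$-dimensional Hilbert space, spanned by $\{|i,0\rangle,|i,1\rangle : 1\leq i\leq n\}$, and the classical state will track the current position within the input so that the position-indexed unitaries can be applied. A natural choice is a family of classical states $s_{i,j}$ with $1\leq i\leq n+1$ and $j\in\{1,2,3\}$ indicating which of the three segments is currently being scanned; together with a constant number of auxiliary states handling the end-markers and the two separator symbols $\#$, this gives ${\bf O}(n)$ classical states.

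For the construction I would first, on reading $|\hspace{-1.5mm}c$, apply $U_s$ to the initial quantum state $|1,0\rangle$ and enter the classical state $s_{1,1}$. In segment 1, in classical state $s_{i,1}$ with scanned symbol $\sigma\in\{0,1\}$, the automaton applies the identity if $\sigma=0$ and the swap $|i,0\rangle\leftrightarrow|i,1\rangle$ (leaving all other basis vectors fixed) if $\sigma=1$, then advances to $s_{i+1,1}$. Hence, after reading the first $x$, exactly the unitary $U_x$ of Theorem \ref{th1} has been applied. A $\#$ then transitions the classical state to $s_{1,2}$. Segment 2 implements $V_y$ by applying, in state $s_{i,2}$ on symbol $\sigma$, the diagonal unitary that sends $|i,1\rangle\mapsto(-1)^\sigma|i,1\rangle$ and leaves $|i,0\rangle$ and all other basis vectors unchanged. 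Segment 3 replays segment 1 (under classical states $s_{i,3}$) to apply $U_x$ once more. Finally, on $\$$, the automaton performs $U_f$ and measures in the computational basis, accepting iff the outcome is $|1,0\rangle$.

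Correctness is then immediate from the analysis in Theorem \ref{th1}: the net unitary applied before measurement is exactly $U_f U_x V_y U_x U_s$, so for $w=x\#y\#x\in A_{yes}(n)$ we have $\sum_i x_i\wedge y_i=0$ and the automaton accepts with probability $1$, while for $w\in A_{no}(n)$ the acceptance probability is $|\tfrac1n\sum_i(-1)^{x_i\wedge y_i}|^2\leq 1/4$, giving rejection with probability at least $3/4$, i.e.\ one-sided error $\tfrac14$. For the 1DFA lower bound I would reduce to Theorem \ref{th3}: from any 1DFA ${\cal D}$ with $k$ states solving $A_D(n)$, Alice (with input $x$) and Bob (with input $y$) build a two-round deterministic communication protocol for $\text{DISJ}'_{1/4}$, where Alice simulates ${\cal D}$ on $|\hspace{-1.5mm}c\,x\#$ and sends the reached state to Bob (in $\lceil\log k\rceil$ bits), Bob continues on $y\#$ and sends the resulting state back, and Alice finishes on $x\$$ and outputs ${\cal D}$'s decision. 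This protocol has cost $2\lceil\log k\rceil$, and since Theorem \ref{th3} forces this to be ${\bf \Omega}(n)$, we conclude $k=2^{{\bf \Omega}(n)}$.

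The main obstacle is not conceptual but bookkeeping: I must verify that the index-dependent unitaries $U_{x_i}$ and $V_{y_i}$ really arise from legal transitions $\Theta(s,\sigma)$, i.e.\ that the needed index $i$ is encoded in the classical state while the bit $x_i$ or $y_i$ comes from the scanned symbol $\sigma$; the MO-1QCFA definition allows $\Theta$ to depend on both, so this is routine provided the classical state count is kept to $3(n+1)+{\bf O}(1)$ as above. A secondary care-point is to confirm that both $U_s$ and $U_f$ are genuine unitaries (only their first column and first row, respectively, are prescribed, and any completion to a unitary matrix suffices), matching the argument already used in Theorem \ref{th1}.
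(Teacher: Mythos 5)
Your proposal is correct and follows essentially the same route as the paper: it implements the three-pass protocol of Theorem \ref{th1} on an MO-1QCFA with $2n$ quantum basis states and ${\bf O}(n)$ position-tracking classical states, and derives the 1DFA lower bound by simulating the automaton as a deterministic communication protocol for $\text{DISJ}'_{1/4}$ and invoking Theorem \ref{th3}. The only differences are cosmetic (the paper reuses the classical states $s_1,\dots,s_{n+1}$ for the first and third segments rather than keeping three separate families, and counts one extra bit for the final answer in the reduction).
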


\begin{figure}[htbp]
\begin{tabular}{|l|}
    \hline
\begin{minipage}[t]{0.93\textwidth}
\begin{enumerate}
\item[1.] Read the left end-marker $\ |\hspace{-1.5mm}c$,  perform $U_s$ on the initial quantum state $|1,0\rangle$,  change its classical state to $\delta(s_0,\ |\hspace{-1.5mm}c )=s_1$, and move the tape head one cell to the right.

\item[2.] While the currently  scanned symbol $\sigma$ is not $\#$, do the following:
 \begin{enumerate}
 \item[2.1] Apply $\Theta(s_i,\sigma)=U_{i,\sigma}$ to the current quantum state.
 \item[2.2] Change the classical state $s_i$ to $s_{i+1}$ and move the tape head one cell to the right.
\end{enumerate}
\item[3.] Move the tape head one cell to the right.

\item[4.] While the currently  scanned symbol $\sigma$ is not $\#$, do the following:
 \begin{enumerate}
 \item[4.1] Apply $\Theta(s_{n+i},\sigma)=V_{i,\sigma}$ to the current quantum state.
 \item[4.2] Change the classical state $s_{n+i}$ to $s_{n+i+1}$ and move the tape head one cell to the right.
\end{enumerate}

\item[5.] Change the classical state $s_{2n+1}$ to  $s_1$ and move the tape head one cell to the right.

\item[6.] While the currently  scanned symbol $\sigma$ is not the right end-marker $\$$, do the following:
 \begin{enumerate}
 \item[6.1] Apply $\Theta(s_i,\sigma)=U_{i,\sigma}$ to the current quantum state.
 \item[6.2] Change the classical state $s_i$ to $s_{i+1}$ and move the tape head one cell to the right.
\end{enumerate}

\item[7.] When the right end-marker $\$$ is reached,    perform $U_{f}$ on the current quantum state,
measure the current quantum state with $M=\{|i,0\rangle\langle i,0|,|i,1\rangle\langle i,1| \}_{i=1}^n$.   If the outcome is $|1,0\rangle$, accept the input; otherwise reject the input.

\end{enumerate}

\end{minipage}\\

\hline
\end{tabular}
 \centering\caption{  Description of the behavior of ${\cal A}(n)$ when solving the promise problem $A_{D}(n)$. }\label{f3}
\end{figure}

\begin{proof}

Let $x=x_1\cdots x_n$ and $y=y_1\cdots y_n$.  Let us consider an MO-1QCFA ${\cal A}(n)$ with $2n$   quantum basis states
$\{|i,0\rangle,|i,1\rangle \}_{i=1}^n$ that will start in  the
 state $|1,0\rangle=(1, \overbrace{0,\cdots,0}^{2n-1})^T$.
 The automaton proceeds as shown in Figure \ref{f3}, where
\begin{align}
&U_s|1,0\rangle=\frac{1}{\sqrt{n}}\sum_{i=1}^n|i,0\rangle;\\
& U_{i,\sigma}|j,0\rangle=|j,1\rangle \ \text{and}\ U_{i,\sigma}|j,1\rangle=|j,0\rangle \ \text{if } \sigma=1 \ \text{and}\ j=i \ \text{otherwise}\  U_{i,\sigma}|j,k\rangle=|j,k\rangle;\\
& V_{i,\sigma}|j,1\rangle=(-1)^{\sigma}|j,1\rangle \ \text{if } j=i ,\  \text{otherwise}\  V_{i,\sigma}|j,k\rangle=|j,k\rangle;\\
&U_f(\sum_{i=1}^n\alpha_i|i,0\rangle+\beta_i|i,1\rangle)=(\frac{1}{\sqrt{n}}\sum_{i=1}^n\alpha_i)|1,0\rangle+\cdots.
\end{align}

For $1\leq i\leq n$, it is easy to verify that $U_s, U_{i,\sigma}, V_{i,\sigma}$ and $U_f$ are unitary transformations. According to the analysis in the proof of Theorem \ref{th1},
if the input string $w\in A_{yes}(n)$, the automaton will get the outcome $|1,0\rangle$ in Step 7 with certainty and therefore
\begin{align}
Pr[{\cal A}\  \text{accepts}\  w]=1.
\end{align}
If the input string $w\in A_{no}(n)$,  the automaton gets the  outcome
$|1,0\rangle$ with probability not more than $1/4$. Thus,
\begin{align}
Pr[{\cal A}\  \text{rejects}\  w]\geq \frac{3}{4}.
\end{align}

Let  an $N$-states
1DFA ${\cal A}'(n)=(S,\Sigma,\delta,s_0,S_{acc})$ solves the promise problem $A_{D}(n)$,  then we can get a deterministic protocol for $\text{DISJ}_{\frac{1}{4}}'(x,y)$ as follows:
\begin{enumerate}
  \item Alice simulates the computation of ${\cal A}'(n)$ on input ``$x\#$" and then sends her state $\widehat{\delta}(s_0,x\#)$ to Bob.
  \item Bob simulates the computation of ${\cal A}'(n)$ on input ``$y\#$" starting at state $\widehat{\delta}(s_0,x\#)$, and then sends his state  $\widehat{\delta}(s_0,x\#y\#)$ to Alice.
  \item Alice simulates the computation of ${\cal A}'(n)$ on input ``$x$" starting at state $\widehat{\delta}(s_0,x\#y\#)$. If $\widehat{\delta}(s_0,x\#y\#x)\in S_acc$ then Alice sends 1 to Bob, otherwise Alice sends 0 to Bob.
\end{enumerate}

The deterministic complexity of the above protocol is $ 1+2 \log{N}$ and therefore $D(\text{DISJ}_{\frac{1}{4}}')\leq 1+2 \log{N}$. According the analysis in  \ref{th3}, we have
\begin{align}
&1+2 \log{N}\geq D(\text{DISJ}_{\frac{1}{4}}')> 0.0073n-1\\
&\Rightarrow N\in 2^{{\bf \Omega}(n)}.
\end{align}

\end{proof}

\section*{Acknowledgements}

Work of the first and third authors supported by
the Employment of Newly Graduated Doctors of Science for Scientific Excellence project/grant (CZ.1.07./2.3.00\linebreak[0]/30.0009)  of Czech Republic.
Work of second author   supported by the National
Natural Science Foundation of China (Nos. 61272058, 61073054).

\end{document}